\newcommand{\eop}{\hspace*{\fill}$\Box$}
\newtheorem{definition}{Definition}
\newtheorem{theorem}[definition]{Theorem}
{\theorembodyfont{\rmfamily}
  \newtheorem{example}[definition]{\it Example}
  \newtheorem{myproof}{\it Proof.}
}
\newenvironment{proof}{\begin{myproof}}{\eop\end{myproof}}
\newcommand{\cA}{\mathcal{A}}
\newcommand{\cF}{\mathcal{F}}
\newcommand{\cG}{\mathcal{G}}
\newcommand{\cI}{\mathcal{I}}
\newcommand{\cP}{\mathcal{P}}
\newcommand{\cS}{\mathcal{S}}
\newcommand{\cV}{\mathcal{V}}
\newcommand{\Var}{{\mathcal{V}ar}}
\newcommand{\FVar}{{\mathit{FV}}}
\newcommand{\Pos}{{\mathcal{P}os}}
\newcommand{\posvar}[1]{{\langle}#1{\rangle}}
\newcommand{\Ran}{{\mathcal{R}an}}
\newcommand{\Terms}{{\mathcal{T}}}
\newcommand{\TwC}[2]{{#1}_{[#2]}}
\newcommand{\Fol}{{\mathcal{F}\!\mathit{ol}}}
\newcommand{\Bothsquares}{\scalebox{.5}[.8]{$\square\!\blacksquare$}}
\newcommand{\Replace}[1]{{#1}|_{\Bothsquares}}
\newcommand{\Subpatterns}{{\mathit{Patterns}_{\lhd}}}
\newcommand{\Inst}{{\mathit{LessGeneralized}}}
\newcommand{\Maximal}[1][]{{\mathit{Least}_{\sqsubseteq #1}}}
\newcommand{\Labels}{{\mathit{Labels}}}
\newcommand{\Initials}{\widetilde{\Labels}}
\renewcommand{\equiv}{=}
\title{On Constructing Constrained Tree Automata Recognizing Ground Instances of Constrained Terms}
\author{
Naoki Nishida
\institute{Graduate School of Information Science, Nagoya University\\
Furo-cho, Chikusa-ku, 4648603 Nagoya, Japan}
\email{nishida@is.nagoya-u.ac.jp}
\and
Masahiko Sakai
\institute{Graduate School of Information Science, Nagoya University\\
Furo-cho, Chikusa-ku, 4648603 Nagoya, Japan}
\email{sakai@is.nagoya-u.ac.jp}
\and
Yasuhiro Nakano
\institute{Graduate School of Information Science, Nagoya University\\
Furo-cho, Chikusa-ku, 4648603 Nagoya, Japan}
\email{ynakano@trs.cm.is.nagoya-u.ac.jp}
}
\begin{document}

%%%%%%%%%%%%%%%%%%%%%%%%%%%%%%%%%%%%%%%%%%%%%%%%%%%
\maketitle
%%%%%%%%%%%%%%%%%%%%%%%%%%%%%%%%%%%%%%%%%%%%%%%%%%%

\begin{abstract}
An inductive theorem proving method for constrained term rewriting
systems, which is based on rewriting induction, needs a decision
procedure for reduction-completeness of constrained terms. In addition,
the sufficient complete property 
of constrained term rewriting systems enables us
to relax the side conditions of some inference rules in the proving
method. These two properties can be reduced to intersection emptiness
problems related to sets of ground instances for constrained terms. 
This paper proposes a method to construct deterministic, complete,
and constraint-complete constrained tree automata recognizing ground
instances of constrained terms.
\end{abstract}

%------------------------------------------------------------------------------
\section{Introduction}
\label{sec:intro}

The \emph{constrained rewriting} in this paper is a computation model
that rewrites a term by applying a constrained rewriting rule if the
term satisfies the constraint on interpretable domains attached to the
rule~\cite{furuichi,BJ08,sakata,SNS11,BJ12}. 
Proposed a \emph{rewriting induction} method on the constrained
rewriting~\cite{sakata,NNKSS11}, a decision procedure of 
\emph{reduction-completeness} of terms must be extended for
\emph{constrained terms}, terms admitting constraints attached, 
in order to apply the method to mechanical inductive theorem proving,
where a term is said to be reduction-complete~\cite{Fri89} 
if any ground instance of the term is a redex. 
If a constrained term rewriting system is terminating and all terms are
reduction-complete, the rewrite system is said to be sufficient complete,
which is useful to relax the application conditions of inference rules
in the above method~\cite{sakata}.
These properties are proved by using tree automata with constraints,
whose rules contain constraints on interpretable subterms.
More precisely, the properties are reducible to the intersection
emptiness problem of ground instances of terms satisfying constraints
attached to the terms. 

This paper proposes a construction method of \emph{constrained tree automata}
that accept ground instances of constrained term (in
Section~\ref{sec:construction}).
Moreover the obtained tree automata have nice properties:
the \emph{constraint-completeness}~\cite{CTA}, completeness and
determinacy,
where the first property is necessary for proving correctness of the
constructed tree automata, and the next two properties contribute avoiding 
size explosion at the construction of product automata. 

%------------------------------------------------------------------------------
\section{Preliminaries}
\label{sec:preliminaries}

In this section, we briefly recall the basic notions of
terms~\cite{TATA}, constraints over predicate logic~\cite{Logic}, and
constrained tree automata~\cite{CTA}. 

%%\paragraph{Terms}

Throughout the paper, we use $\cV$ as a countably infinite set of
\emph{variables}.
For a \emph{signature} $\cF$ (a finite set of function symbols with fixed
arities), the set of \emph{terms} over $\cF$ and $X$ $\subseteq$ $\cV$ is
denoted by $\Terms(\cF,X)$.  
The notation $f/n$ represents the function symbol $f$ with the arity $n$.
%For a function symbol $f$, we denote the arity of $f$ by $\Arity(f)$.
The set $\Terms(\cF,\emptyset)$ of \emph{ground terms} is abbreviated to
$\Terms(\cF)$.
The set of variables appearing in term $t$ is denoted by $\Var(t)$.
A term is called \emph{linear} if any variable appears in the term at
most once. 
%The identity of terms $s$ and $t$ is denoted by $s$ $\equiv$ $t$.
The set of \emph{positions} for term $t$ is denoted by $\Pos(t)$:
\begin{itemize}
 \item $\Pos(x)$ $=$ $\{ \varepsilon \}$ for $x$ $\in$ $\cV$, and
 \item $\Pos(f(t_1,\ldots,t_n))$ $=$ $\{ \varepsilon \} \cup \{ i.\pi
       \mid 1 \leq i \leq n, ~\pi \in \Pos(t_i)\}$ for $f/n$ $\in$
       $\cF$.
\end{itemize}
%The \emph{hole} $\square$ is a special constant not appearing in the initial
%signature.
For terms $t,u$ and a position $\pi$ of $t$, the notation $t[u]_\pi$ denotes
the term obtained from $t$ by replacing the \emph{subterm $t|_\pi$ of $t$
at $\pi$} with $u$.
For a \emph{substitution} $\theta$, we denote the \emph{range} of
$\theta$ by $\Ran(\theta)$. 

%%\paragraph{Constraints}

Let $\cG$ be a signature and $\cP$ be a set of \emph{predicate symbols}
with fixed arities.
%The arity of predicate symbol $p$ is denoted by $\Arity(p)$.
The notation $p/n$ represents the predicate symbol $p$ with the arity
$n$. 
First-order (quantifier-free) \emph{formulas} $\phi$ over $\cG$, $\cP$, 
and $\cV$ are defined in BNF %Backus Naur form
 as follows:\footnote{ It is possible to allow to introduce quantifiers.
To be more precise, introduction of quantifiers to our setting does not
affect the results. 
Though, for the sake of readability, we do not introduce them here. }
\[
  \phi ::= p(t_1,\ldots,t_n) \mid \top \mid \bot \mid (\neg \phi) \mid
 (\phi \vee \phi) \mid (\phi \wedge \phi)
\]
where $p/n$ $\in$ $\cP$ and $t_1,\ldots,t_n$ $\in$ $\Terms(\cG,\cV)$.
We may omit brackets ``$($'', ``$)$'' from formulas as usual. 
The set of first-order formulas over $\cG$, $\cP$, and $X \subseteq \cV$
is denoted by $\Fol(\cG,\cP,X)$.
For a formula $\phi$, the notation $\Var(\phi)$ represents the set of
variables in $\phi$.  
A variable-free formula is called \emph{closed}.
%We also denote the identity of formulas $\phi$ and $\psi$ by $\phi$
%$\equiv$ $\psi$.
A \emph{structure} for $\cG$ and $\cP$ 
%formulas in $\Fol(\cG,\cP,\cV)$ 
is a tuple $\cS$ $=$ $(A,\cI_\cG,\cI_\cP)$ such that the \emph{universe}
$A$ is a non-empty set of concrete values, $\cI_\cG$ and $\cI_\cP$ 
with types $\cG \to \{ f \mid \mbox{$f$ is an $n$-ary function on
$A$}\}$ and $\cP \to 2^{A\times \cdots \times A}$
are interpretations for $\cG$ and $\cP$, resp.:
\begin{itemize}
 \item $\cI_\cG(g)(a_1,\ldots,a_n)$ $\in$ $A$ for $g/n$ $\in$ $\cG$, and
 \item $\cI_\cP(p)$ $\subseteq$ $A^n$ for $p/n$ $\in$ $\cP$.
\end{itemize}
The interpretation of formulas $\phi$ w.r.t.\ $\cS$, denoted by $\cS
\models \phi$, is defined as usual.
%$\cS \models \phi$ if 
% \begin{itemize}
%  \item $\phi$ $\equiv$ $\top$, 
%  \item $\phi$ $\equiv$ $p(t_1,\ldots,t_n)$ and
%	$(\cI_\cG(t_1),\ldots,\cI_\cG(t_n))$ $\in$ $\cI_\cP(p)$, 
%  \item $\phi$ $\equiv$ $\neg \psi$ and $\cS \not\models \psi$,
%  \item $\phi$ $\equiv$ $\phi_1 \vee \phi_2$ and $\cS \models \phi_i$ for
%	some $i$ $\in$ $\{1,2\}$, and
%  \item $\phi$ $\equiv$ $\phi_1 \wedge \phi_2$ and $\cS \models \phi_i$ for
%	all $i$ $\in$ $\{1,2\}$.
% \end{itemize}
We say that a formula $\phi$ \emph{holds} w.r.t.\ $\cS$ if $\cS \models \phi$.
Formulas in $\Fol(\cG,\cP,\cV)$ interpreted by $\cS$ are called
\emph{constraints} (\emph{w.r.t.\ $\cS$}).
%A \emph{constrained term} is a pair $(t,\phi)$ of a term $t$ and a
%constraint $\phi$, written as $\TwC{t}{\phi}$, where $\Var(\phi)
%\subseteq \Var(t)$. 

%\paragraph{Constrained Tree Automata}

In the following, we use $\cF$ and $\cG$ for signatures, $\cP$ for a set of
 predicate symbols, and $\cS$ $=$ $(A,\cI_\cG,\cI_\cP)$ for a structure
 for $\cG$ and $\cP$, without notice. 
%Before the formalization,
Before formalizing constrained tree automata, 
we generalize the interpretation of constraints under terms.
%\begin{definition}%[$\cS,t \models \phi$]
For a sequence $\pi$ of natural numbers, the notation $\posvar{\pi}$ denotes
the special variable related to $\pi$.
We denote the set of such variables by $\posvar{\mathbb{N}^*}$:
$\posvar{\mathbb{N}^*}$ $=$ $\{ \posvar{\pi} \mid \pi \in \mathbb{N}^* \}$
$\subseteq$ $\cV$.
A formula $\phi$ in $\Fol(\cG,\cP,\posvar{\mathbb{N}^*})$ \emph{holds}
 w.r.t.\ a ground term $t$ $\in$ $\Terms(\cF\cup\cG)$ if $\cS,t \models
 \phi$, where $\models$ is inductively defined as follows: 
 \begin{itemize}
  \item $\cS,t \models \top$,
  \item $\cS,t \not\models \bot$,
  \item $\cS,t \models p(t_1,\ldots,t_n)$ if $\pi$ $\in$
	$\Pos(t)$ and $t|_\pi$ $\in$ $\Terms(\cG)$ for all variables
	$\posvar{\pi}$ $\in$ $\bigcup_{i=1}^n \Var(t_i)$, and
	$(\cI_\cG(t_1\theta),\ldots,\cI_\cG(t_n\theta))$ $\in$
	$\cI_\cP(p)$, where $p/n$ $\in$ $\cP$ and $\theta$ is the
	substitution $\{ \posvar{\pi} \mapsto t|_\pi \mid \posvar{\pi}
	\in \bigcup_{i=1}^n \Var(t_i) \}$,
  \item otherwise, $\cS,t \not\models p(t_1,\ldots,t_n)$, and
  \item the relation $\models$ is defined as usual for any Boolean
	connective. % $\neg$, $\vee$, $\wedge$.
%%  \item $\cS,t \models \neg \phi$ if $\cS,t \not\models \phi$,
%%  \item $\cS,t \models \phi_1 \vee \phi_2$ if $\cS,t \models \phi_1$ or
%%	$\cS,t \models \phi_2$, and
%%  \item $\cS,t \models \phi_1 \wedge \phi_2$ if $\cS,t \models \phi_1$
%%	and $\cS,t \models \phi_2$.
 \end{itemize}
We may omit $\cS$ from ``$\cS,t \models \phi$'' if it is clear in context.
Note that $t$ $\models$ $\neg \phi$ does not coincide with $t$
$\not\models$ $\phi$ for every ground term $t$ and constraint $\phi$
(see~\cite{CTA}).
%\end{definition}

By generalizing \emph{AWEDCs}~\cite{TATA}, \emph{constrained tree
automata} are defined as follows~\cite{CTA}~---~note that
\emph{AWEDCs}~\cite{TATA} are CTAs.
\begin{definition}%[constrained tree automaton]
%\label{def:constrained-tree-automata}
%Let $\cF,\cG$ be signatures, $\cP$ be a set of predicate symbols, and
% $\cS$ %$=$ $(A,\cI_\cG,\cI_\cP)$
% be a structure for $\cG$ and $\cP$.
A \emph{constrained tree automata} (CTA) over $\cF$,
 $\cG$, $\cP$, and $\cS$ is a tuple $\cA$ $=$ $(Q,Q_f,\Delta)$ such that
%the only difference from NFTAs is the form of transition rules
 \begin{itemize}
  \item $Q$ is a finite set of \emph{states} (unary symbols), 
  \item $Q_f$ is a finite set of \emph{final} states (i.e., $Q_f$
	$\subseteq$ $Q$), and 
  \item $\Delta$ is a finite set of \emph{constrained transition
	rules}\,% 
\footnote{ We consider transition rules $l \xrightarrow[]{\phi} r$ and
 $l \xrightarrow[]{\psi} r$ to be equivalent if $\Var(\phi)$ $=$
 $\Var(\psi)$ and $\phi$ is semantically equivalent to $\psi$ (i.e.,
 $\phi \leftrightarrow \psi$ is valid w.r.t.\ $\cS$).}
 of the form
 $f(q_1(x_1),\ldots,q_n(x_n)) \xrightarrow[]{\phi} q(f(x_1,\ldots,x_n))$
 $\in$ $\Delta$ where $f/n$ $\in$ 
 $\cF\cup\cG$, $q_1,\ldots,q_n,q$ $\in$ $Q$ and $\phi$ $\in$
 $\Fol(\cG,\cP,\posvar{\mathbb{N}^*})$.  
 \end{itemize}
 We often omit the arguments of states by writing $q$ instead of
 $q(t)$, and then transition rules are written in the form
 $f(q_1,\ldots,q_n) \xrightarrow[]{\phi} q$. 
 We also may omit $\top$ from $f(q_1,\ldots,q_n)  \xrightarrow[]{\top}
 q$. 

The \emph{move relation} $\to_\cA$ is defined as follows:
$t$ $\to_\cA$ $u$ iff $t$ has no nest of state symbols,  $t|_\pi$
 $\equiv$ $f(q_1(t_1),\ldots,q_n(t_n))$, $f/n$ $\in$ $\cF\cup\cG$,
 $t_1,\ldots,t_n$ $\in$ $\Terms(\cF\cup\cG)$,  
$f(q_1,\ldots,q_n) \xrightarrow[]{\phi} q$ $\in$
 $\Delta$, $f(t_1,\ldots,t_n) \models \phi$, 
and $u$ $\equiv$ $t[q(f(t_1,\ldots,t_n))]_\pi$.
\end{definition}
The terminologies of CTAs are defined analogously to those of \emph{tree
 automata}, except for determinism and completeness~---~$\cA$ is called
 \begin{itemize}
  \item \emph{deterministic} if for every ground term
       $t$, there is at most one state $q$ $\in$ $Q$ such that $t$
       $\to^*_\cA$ $q$, and
  \item \emph{complete} if for every ground term $t$,
	there is at least one state $q$ $\in$ $Q$ such that $t$
	$\to^*_\cA$ $q$.
 \end{itemize}
Note that the above definition of determinism and completeness for CTAs
is the same as the definition of the properties for AWEDCs (see
\cite{TATA}). 

\begin{example}\label{ex:CTA}
Let $\cF$ $=$ $\{ \mathsf{f}/2 \}$, $\cG_{\rm int}$ $=$
 $\{ \mathsf{s}/1, \mathsf{p}/1, \mathsf{0}/0 \}$, $\cP$ $=$ $\{ =, \ne, 
 \leq, < \}$, and $\cS_{\rm int}$ be the structure
 $(\mathbb{Z},\cI_{\cG_{\rm int}},\cI_{\cP_{\rm int}})$ for $\cG_{\rm
 int}$ and $\cP_{\rm int}$ such that $\cI_{\cG_{\rm
 int}}(\mathsf{s})(x)$ $=$ $x+1$, $\cI_{\cG_{\rm int}}(\mathsf{p})(x)$
 $=$ $x-1$, $\cI_{\cG_{\rm int}}(\mathsf{0})$ $=$ $0$, and
 $=,\ne,\leq,<$ are interpreted over integers as usual.
 Consider a CTA $\cA_{\rm int}$ $=$ $(\{ q_1, q_2 \},\{ q_2 \},\Delta)$
 over $\cF$, $\cG_{\rm int}$, $\cP_{\rm int}$, and $\cS_{\rm int}$ where 
 \[
 \Delta = 
 \{~~
 \mathsf{0} \to q_1,
 \quad
 \mathsf{s}(q_1) \to q_1, 
 \quad
 \mathsf{p}(q_1) \to q_1, 
 \quad
 \mathsf{f}(q_1,q_1) \xrightarrow[]{\posvar{1.1} \leq \mathsf{p}(\posvar{2})} q_2,
 \quad
 \mathsf{f}(q_1,q_1) \to q_1
 ~~\}
\]
 The term
 $\mathsf{f}(\mathsf{s}(\mathsf{0}),\mathsf{s}(\mathsf{0}))$ is accepted
 by $\cA_{\rm int}$~---~we have that
 $\mathsf{f}(q_1(\mathsf{s}(\mathsf{0})),q_1(\mathsf{s}(\mathsf{0})))
 \to_{\cA_{\rm int}}$ $q_2$ since both 
 $\mathsf{f}(\mathsf{s}(\mathsf{0}),\mathsf{s}(\mathsf{0}))$
 $\to^*_{\cA_{\rm int}}$
 $\mathsf{f}(q_1(\mathsf{s}(\mathsf{0})),q_1(\mathsf{s}(\mathsf{0})))$
 and $\mathsf{f}(\mathsf{s}(\mathsf{0}),\mathsf{s}(\mathsf{0})) \models 
 \posvar{1.1} \leq \mathsf{p}(\posvar{2})$ hold.
 The term $\mathsf{f}(\mathsf{s}(\mathsf{0}),\mathsf{s}(\mathsf{0}))$
 also transitions into $q_1$, and thus, $\cA_{\rm int}$ is not
 deterministic. 
 On the other hand, the term
 $\mathsf{f}(\mathsf{0},\mathsf{s}(\mathsf{0}))$ is not accepted by
 $\cA_{\rm int}$ since
 $\mathsf{f}(\mathsf{0},\mathsf{s}(\mathsf{0})) \not\models \posvar{1.1}
 \leq \mathsf{p}(\posvar{2})$. 
 Note that $\cA_{\rm int}$ recognizes the set of instances of the
 \emph{constrained terms} $\TwC{\mathsf{f}(\mathsf{s}(x),y)}{x \leq \mathsf{p}(y)}$,
 $\TwC{\mathsf{f}(\mathsf{p}(x),y)}{x \leq \mathsf{p}(y)}$, and 
 $\TwC{\mathsf{f}(\mathsf{f}(x,z),y)}{x \leq \mathsf{p}(y)}$,  
 %where $t$ $\in$ $\Terms(\cF\cup\cG_{\rm int})$, 
 i.e.,  
 $L(\cA_{\rm int})$ $=$ 
 $\{ \mathsf{f}(\mathsf{s}(t_1),t_2),
 ~
 \mathsf{f}(\mathsf{p}(t_1),t_2),
 ~
 \mathsf{f}(\mathsf{f}(t_1,t),t_2)
 \mid t \in \Terms(\cF\cup\cG_{\rm int}),
 ~t_1,t_2 \in \Terms(\cG_{\rm int}),~\cI_{\cG_{\rm int}}(t_1) \leq
 \cI_{\cG_{\rm int}}(\mathsf{p}(t_2)) \}$ (see
 Definition~\ref{def:constrained-term}).

 Consider the CTA $\cA_{\rm int}'$ $=$ $(\{ q_1, q_2 \},\{ q_2
 \},\Delta')$ obtained from 
 $\cA_{\rm int}$ by replacing $\posvar{1.1} \leq \mathsf{p}(\posvar{2})$
 with $\neg (\mathsf{p}(\posvar{2}) < \posvar{1.1})$:
 \[
 \Delta' = 
 \{~~
 \mathsf{0} \to q_1,
 \quad
 \mathsf{s}(q_1) \to q_1, 
 \quad
 \mathsf{p}(q_1) \to q_1, 
 \quad
 \mathsf{f}(q_1,q_1) \xrightarrow[]{\neg (\mathsf{p}(\posvar{2}) < \posvar{1.1})} q_2,
 \quad
 \mathsf{f}(q_1,q_1) \to q_1
 ~~\}
\]
 The constraints $x \leq \mathsf{p}(y)$ and $\neg (\mathsf{p}(y) < x)$ are 
 semantically equivalent, i.e., for all terms $t_1,t_2$ $\in$
 $\Terms(\cG_{\rm int})$, $\cS_{\rm int} \models t_1 \leq \mathsf{p}(t_2)$
 iff $\cS_{\rm int} \models \neg (\mathsf{p}(t_2) < t_1)$. 
 However, this is not the case for similar constraints over fixed terms,
 e.g., $\mathsf{f}(\mathsf{0},\mathsf{s}(\mathsf{0})) \models
 \posvar{1.1} \leq \mathsf{p}(\posvar{2})$ does not hold, but
 $\mathsf{f}(\mathsf{0},\mathsf{s}(\mathsf{0})) \models 
 \neg (\mathsf{p}(\posvar{2}) < \posvar{1.1})$ holds.
 Thus, $\mathsf{f}(\mathsf{0},\mathsf{s}(\mathsf{0}))$ is accepted by
 $\cA_{\rm int}'$, and hence $L(\cA_{\rm int})$ $\ne$ $L(\cA_{\rm
 int}')$. 
\end{example}

A CTA $\cA$ $=$ $(Q,Q_f,\Delta)$ is called
\emph{constraint-complete}~\cite{CTA} if for every ground term $t$ $\in$
$\Terms(\cF\cup\cG)$ and all transition rules 
$f(q_1,\ldots,q_n) \xrightarrow[]{\phi} q$ $\in$ $\Delta$ with $t$ $\equiv$
$f(t_1,\ldots,t_n)$ $\to^*_\cA$ $f(q_1,\ldots,q_n)$, 
% and $t \models \phi$, 
we have that $\pi$ $\in$ $\Pos(t)$ and $t|_\pi$ $\in$ $\Terms(\cG)$
for all variables $\posvar{\pi}$ in $\phi$.
Note that every CTA can be transformed into an equivalent,
deterministic, complete, and constraint-complete CTA~\cite{CTA}. 
Note also that completeness and constraint-completeness are different notions. 

%------------------------------------------------------------------------------
\section{Recognizing Ground Instances of Constrained Terms}
\label{sec:construction}

This section defines \emph{constrained terms} and their instances,
and then proposes a method for constructing a CTA recognizing the set of
ground instances for a given set of constrained terms. 
%Moreover, we describe properties of the resulted CTA.
The method is based on the construction of a \emph{tree automaton}
recognizing the set of ground instances for unconstrained terms, which
is complete and deterministic~\cite[Exercise~1.9]{TATA}.
Accessibility of states is in general undecidable, and thus, it is
difficult to get rid of inaccessible states which affect
the \emph{intersection emptiness problem}. 
In this sense, it is worth developing a construction method that
introduces inaccessible states as little as possible. 
\begin{definition}\label{def:constrained-term}
A \emph{constrained term} is a pair $(t,\phi)$, written as
$\TwC{t}{\phi}$, of a linear term $t$ 
 $\in$ $\Terms(\cF\cup\cG,\cV)$ and a formula $\phi$ $\in$
 $\Fol(\cG,\cP,\Var(t))$. 
The set of \emph{ground instances} of a constrained term 
$\TwC{t}{\phi}$, denoted by $G(\TwC{t}{\phi})$, is defined
 as follows:
$G(\TwC{t}{\phi})$ $=$
$\{ t\theta \in \Terms(\cF\cup\cG) 
 \mid \Ran(\theta|_{\FVar(\phi)}) \subseteq \Terms(\cG),
 ~ \cS \models \phi\theta \}$.
%The set of ground instances is extended to the set of constrained terms
% as follows:
The argument of $G$ is naturally extended to sets 
$ %\[
 G(T) = \bigcup_{\TwC{t}{\phi} \in T} G(\TwC{t}{\phi})
$. %\]
\end{definition}

To deal with constrained terms, we consider \emph{constrained
patterns}. 
We introduce wildcard symbols $\square$ and $\blacksquare$
to denote arbitrary interpretable and un-interpretable terms
resp.
In the following, we denote
$\Terms(\cF\cup\cG\cup\{\square,\blacksquare\})$ (the set of patterns)
by $\Terms_{\square,\blacksquare}$, 
$\Terms(\cG\cup\{\square\})$ (the set of interpretable patterns) by
$\Terms_\square$, and
$\Terms(\cF\cup\cG\cup\{\square,\blacksquare\})\setminus\Terms(\cG\cup\{\square\})$
(the set of un-interpretable patterns) by $\Terms_\blacksquare$. 
\begin{definition}
% Let $\square$ and $\blacksquare$ be constants not appeared in
% $\cF\cup\cG$. 
 A pair of a ground term $u$ $\in$
 $\Terms_{\square,\blacksquare}$ and a formula $\phi \in
 \Fol(\cG,\cP,\posvar{\mathbb{N}^+})$ is called a \emph{constrained
 pattern} if $\pi \in \Pos(t)$ and $t|_\pi$ $\in$ $\Terms_\square$ for
 each variable $\posvar{\pi}$ that occurs in $\phi$. 

 For a constrained term $\TwC{t}{\phi}$, $\Replace{(\TwC{t}{\phi})}$
 denotes the set of constrained terms $\TwC{t\theta}{\phi\sigma}$ where
\begin{itemize}
%  \item $\tilde{t}$ is a term obtained from $t$ by replacing with
%    $\square$ each occurrence in $t$ of variables that are free in
%    $\phi$ and replacing with $\square$ or $\blacksquare$ each
%    occurrence in $t$ of the other variables,
 \item $\theta$ $=$ $\{ x \mapsto \square \mid x \in \Var(t) \cap
       \FVar(\phi) \} \cup 
       \bigcup_{y \in \Var(t) \setminus \FVar(\phi)}
            \{ y \mapsto v \mid \text{$v$ is either $\square$ or $\blacksquare$} \}$, and
 \item $\sigma$ $=$ $\{ x \mapsto \posvar{\pi} \mid x \in
       \Var(t)\cap\FVar(\phi),~t|_\pi = x\}$.%
       \footnote{ $\pi$ is unique since $t$ is linear. }
\end{itemize}
We extend the domain of $\Replace{(\cdot)}$ to sets of
 constrained terms:
$\Replace{T}$ $\!=\!$
 $\bigcup_{\TwC{t}{\phi}\in T}
 \Replace{(\TwC{t}{\phi})}$.
\end{definition}
% Note that $\Replace{(\TwC{t}{\phi})}$ is a set of constrained patterns. 
 Roughly speaking, a constrained pattern $\TwC{u}{\phi}$
 represents a set of terms obtained by replacing $\square$ and
 $\blacksquare$ in $u$ by interpretable and un-interpretable terms,
 resp., such that the constraint obtained by the corresponding
 replacement holds.
\begin{example}
 \label{ex:replace}
Let $\cF$ $=$ $\{ \mathsf{g}/2\}$, $\cG$ $=$ $\{
 \mathsf{0}/0,\mathsf{s}/1\}$, 
and $\cP$ $=$ $\{ \leq,\geq \}$.
Let symbols $\mathsf{0}$, $\mathsf{s}$, $\leq$, and $\geq$ be 
interpreted by $\cS$ as zero function and successor function, 
less-or-equal relation, and greater-or-equal relation, resp.
 For $T$ $=$ $\{ \TwC{\mathsf{g}(x,y)}{x \leq \mathsf{0}}, 
        \TwC{\mathsf{g}(\mathsf{s}(x),y)}{x \geq \mathsf{0}} \}$,
\[
 \Replace{T} =
 \{~
    \TwC{\mathsf{g}(\square,\square)}{\posvar{1}\leq \mathsf{0}}, 
    \ \TwC{\mathsf{g}(\square,\blacksquare)}{\posvar{1} \leq \mathsf{0}},
    \ \TwC{\mathsf{g}(\mathsf{s}(\square),\square)}
          {\posvar{1.1} \geq \mathsf{0}}, 
    \ \TwC{\mathsf{g}(\mathsf{s}(\square),\blacksquare)}
          {\posvar{1.1} \geq \mathsf{0}}
 ~\}
\]
\end{example}

Next, we define a function to augment their subterms to constrained patterns. 
\begin{definition}
For a set $U$ of constrained patterns, 
the set $\Subpatterns(U)$ of proper subterms for constrained patterns in
 $U$ is defined as follows: 
\[
 \Subpatterns(U) =
 \{ u|_\pi \mid \TwC{u}{\phi} \in U,
 ~ \pi \in \Pos(u) \setminus \{ \varepsilon \}, ~ u|_\pi \not\in \{\square,\blacksquare\} \}
\]
\end{definition}
\begin{example}
For $T$ in Example~\ref{ex:replace}, 
$\Subpatterns(\Replace{T})$
$=$ $\{ \mathsf{s}(\square) \}$.
\end{example}

We define a quasi-order over constrained patterns 
that represents an approximation relation.
% in other to compare patterns w.r.t.\ concreteness. 
\begin{definition}
 \label{sqsubseteq}
 A quasi-order $\sqsubseteq$ over terms in
 $\Terms_{\square,\blacksquare}$ is inductively defined as
 follows: 
 \begin{itemize}
  \item $\square$ $\sqsubseteq$ $u$ for $u$ $\in$
	$\Terms_\square$,
  \item $\blacksquare$ $\sqsubseteq$ $u$ for $u$ $\in$
	$\Terms_\blacksquare$, and
  \item $f(u_1,\ldots,u_n)$ $\sqsubseteq$ $f(u'_1,\ldots,u'_n)$ if $u_i$
       $\sqsubseteq$ $u'_i$ for all $1$ $\leq$ $i$ $\leq$ $n$.
 \end{itemize}
%We write $u$ $\bumpeq$ $u'$ if $u$ $\sqsubseteq$ $u'$ and $u'$
% $\sqsubseteq$ $u$, and $u$ $\sqsubset$ $u'$ if $u$ $\sqsubseteq$ $u'$
% and $u'$ $\not\sqsubseteq$ $u$. 

Abusing notations, we also define a quasi-order $\sqsubseteq$ over
 formulas in $\Fol(\cG,\cP,X)$ as follows:
$\phi$ $\sqsubseteq$ $\phi'$
iff $\FVar(\phi)$ $\subseteq$ $\FVar(\phi')$ and $\phi' \Rightarrow
 \phi$ is valid w.r.t.\ $\cS$.
% We define $\bumpeq$ and $\sqsubset$ over formulas as well as the case of
%  patterns. 

A quasi-order $\sqsubseteq$ over constrained patterns is defined as
 follows:
 $\TwC{u}{\phi}$ $\sqsubseteq$ $\TwC{u'}{\phi'}$
 iff $u$ $\sqsubseteq$ $u'$ and $\phi$ $\sqsubseteq$ $\phi'$.
For constrained patterns $\TwC{u}{\phi}$ and $\TwC{u'}{\phi'}$, we say
 that $\TwC{u}{\phi}$ is \emph{more general} than $\TwC{u'}{\phi'}$
 ($\TwC{u'}{\phi'}$ is \emph{less general} than $\TwC{u}{\phi}$) if
 $\TwC{u}{\phi}$ $\sqsubseteq$ $\TwC{u'}{\phi'}$. 

We use $\bumpeq$ for equality part of $\sqsubseteq$, and
$\sqsubset$ for strict part of $\sqsubseteq$.
\end{definition}

Next, to compute more concrete patterns, we define an operation $\sqcap$
for constrained patterns.
\begin{definition}
We define a binary operator $\sqcap$ over
 $\Terms_{\square,\blacksquare}$ inductively as follows:
\begin{itemize}
 \item $\square$ $\sqcap$ $u$ $=$ $u$ $\sqcap$ $\square$ $=$ $u$ for $u$
       $\in$ $\Terms_\square$, 
 \item $\blacksquare$ $\sqcap$ $u$ $=$ $u$ $\sqcap$ $\blacksquare$ $=$
       $u$ for $u$ $\in$ $\Terms_\blacksquare$, and
 \item $f(u_1,\ldots,u_n)$ $\sqcap$ $f(u'_1,\ldots,u'_n)$ $=$ $f(u_1
       \sqcap u'_1,\ldots,u_n \sqcap u'_n)$. 
\end{itemize}
\end{definition}

In the following, we define the set of constrained patterns used for
labels of states.
\begin{definition}
For a set $U$ of constrained patterns, $\Inst(U)$ is the smallest set
 satisfying the following:
 \begin{itemize}
  \item $\{\TwC{\square}{\top}, \TwC{\blacksquare}{\top} \}
	\cup \{ \TwC{u}{\top} \mid u \in \Subpatterns(U) \}$
	$\subseteq$ $\Inst(U)$,
  \item $\{ \TwC{u}{\phi'} \mid \TwC{u}{\phi} \in U,
	~
	\phi' \in \{ \phi, \neg \phi \},
	~
	\mbox{$\phi'$ is satisfiable w.r.t.\ $\cS$} \}$
	$\subseteq$ $\Inst(U)$,
	and
  \item $\TwC{(u \sqcap u')}{\phi\wedge\phi'}$ if
	$\TwC{u}{\phi}$, $\TwC{u'}{\phi'}$ $\in$ $\Inst(U)$
	and $\phi\wedge\phi'$ is satisfiable w.r.t.\ $\cS$.
 \end{itemize}
Note that we do not distinguish terms that enjoy $\bumpeq$, and that
 $\Inst(U)$ is finite up to $\bumpeq$. 
\end{definition}
\begin{example}
 \label{ex:Inst}
Consider $(\cF,\cG,\cP,\cS)$ and $T$ in Example~\ref{ex:replace}.
The set $\Inst(\Replace{T})$
 contains the following in addition to
 $\Replace{T}$: 
\[
% \begin{array}{@{}l@{}}
%  \Inst(\Replace{T}) = \\
%  ~~
% \Replace{T} \cup
\left \{
 \begin{array}{lll}
  \TwC{\square}{\top}, 
   & 
   \TwC{\blacksquare}{\top},
   &
   \TwC{\mathsf{s}(\square)}{\top},
   \\
  \TwC{\mathsf{g}(\square,\square)}{\neg(\posvar{1}\leq \mathsf{0})},
   &
  \TwC{\mathsf{g}(\square,\blacksquare)}{\neg(\posvar{1}\leq \mathsf{0})},
  \\
  \TwC{\mathsf{g}(\mathsf{s}(\square),\square)}{\neg (\posvar{1.1} \geq \mathsf{0})},
  &
  \TwC{\mathsf{g}(\mathsf{s}(\square),\blacksquare)}{\neg (\posvar{1.1}
  \geq \mathsf{0})},
  \\
   \TwC{\mathsf{g}(\mathsf{s}(\square),\square)}{\posvar{1} \leq
   \mathsf{0} \wedge \posvar{1.1} \geq \mathsf{0}}, 
   &
   \TwC{\mathsf{g}(\mathsf{s}(\square),\square)}{\posvar{1} \leq
   \mathsf{0} \wedge \neg (\posvar{1.1} \geq \mathsf{0})}, 
   \\
  \TwC{\mathsf{g}(\mathsf{s}(\square),\square)}{\neg (\posvar{1} \leq
   \mathsf{0}) \wedge \posvar{1.1} \geq \mathsf{0}}, 
   &
   \TwC{\mathsf{g}(\mathsf{s}(\square),\square)}{\neg (\posvar{1} \leq
   \mathsf{0}) \wedge \neg (\posvar{1.1} \geq \mathsf{0})},
   \\
   \TwC{\mathsf{g}(\mathsf{s}(\square),\blacksquare)}{\posvar{1} \leq
   \mathsf{0} \wedge \posvar{1.1} \geq \mathsf{0}},
   &
  \TwC{\mathsf{g}(\mathsf{s}(\square),\blacksquare)}{\posvar{1} \leq
   \mathsf{0} \wedge \neg (\posvar{1.1} \geq \mathsf{0})},
   \\
   \TwC{\mathsf{g}(\mathsf{s}(\square),\blacksquare)}{\neg (\posvar{1}
   \leq \mathsf{0}) \wedge \posvar{1.1} \geq \mathsf{0}}, 
   &  
   \TwC{\mathsf{g}(\mathsf{s}(\square),\blacksquare)}{\neg (\posvar{1}
   \leq \mathsf{0}) \wedge \neg (\posvar{1.1} \geq \mathsf{0})}
   \\
 \end{array}
\right\}
% \end{array}
\]
The relation between the constrained patterns in $\Inst(\Replace{T})$
 w.r.t.\ $\sqsubseteq$ is illustrated as follows:
\[
 \xymatrix@C=0pt@R=5pt{
%%%%
 \TwC{\square}{\top} 
 & & \hspace{50pt}
 & & & & \hspace{50pt} &  \ar[llllll]
 & \TwC{\mathsf{s}(\square)}{\top}
 \\
%%%%
 & & & \ar[ddddll]
 & \TwC{\mathsf{g}(\square,\square)}{\neg(\posvar{1}\leq \mathsf{0})}
 & & & \ar[ll] \ar[llddd]
 & \TwC{\mathsf{g}(\mathsf{s}(\square),\square)}
       {\neg (\posvar{1} \leq \mathsf{0})
        \wedge \neg (\posvar{1.1} \geq \mathsf{0})}
 \\
 & & & \ar[dddll]
 & \TwC{\mathsf{g}(\square,\square)}{\posvar{1}\leq \mathsf{0}}
 & & & \ar[ull] \ar[dll]
 & \TwC{\mathsf{g}(\mathsf{s}(\square),\square)}
       {\neg (\posvar{1} \leq \mathsf{0}) 
        \wedge \posvar{1.1} \geq \mathsf{0}}
 \\
 & & & \ar[ddll]
 & \TwC{\mathsf{g}(\mathsf{s}(\square),\square)}{\posvar{1.1} \geq
 \mathsf{0}}
 & & & \ar[ull] \ar[ll]
 & \TwC{\mathsf{g}(\mathsf{s}(\square),\square)}
       {\posvar{1} \leq \mathsf{0} 
        \wedge \posvar{1.1} \geq \mathsf{0}}
\\
 & & & \ar[dll]
 & \TwC{\mathsf{g}(\mathsf{s}(\square),\square)}{\neg(\posvar{1.1} \geq \mathsf{0})}
 & & & \ar[uull] \ar[ll]
 & \TwC{\mathsf{g}(\mathsf{s}(\square),\square)}
       {\posvar{1} \leq \mathsf{0} 
        \wedge \neg (\posvar{1.1} \geq \mathsf{0})}
 \\
 \TwC{\blacksquare}{\top}
 & 
 \\
 & & & \ar[ull]
 & \TwC{\mathsf{g}(\square,\blacksquare)}{\neg(\posvar{1} \leq \mathsf{0})}
 & & & \ar[dddll] \ar[ll]
 & \TwC{\mathsf{g}(\mathsf{s}(\square),\blacksquare)}
 {\neg (\posvar{1} \leq \mathsf{0})
 \wedge \neg (\posvar{1.1} \geq \mathsf{0})}
 \\
 & & & \ar[uull]
 & \TwC{\mathsf{g}(\square,\blacksquare)}{\posvar{1} \leq \mathsf{0}}
 & & & \ar[ull] \ar[dll]
 & \TwC{\mathsf{g}(\mathsf{s}(\square),\blacksquare)}
       {\neg (\posvar{1} \leq \mathsf{0}) 
        \wedge \posvar{1.1} \geq \mathsf{0}}
 \\
 & & & \ar[uuull]
 & \TwC{\mathsf{g}(\mathsf{s}(\square),\blacksquare)}{\posvar{1.1} \geq
 \mathsf{0}} 
 & & & \ar[ull] \ar[ll]
 & \TwC{\mathsf{g}(\mathsf{s}(\square),\blacksquare)}
       {\posvar{1} \leq \mathsf{0} 
        \wedge \posvar{1.1} \geq \mathsf{0}}
 \\
 & & & \ar[uuuull]
 & \TwC{\mathsf{g}(\mathsf{s}(\square),\blacksquare)}{\neg(\posvar{1.1}
 \geq \mathsf{0})} 
 & & & \ar[uull] \ar[ll]
 & \TwC{\mathsf{g}(\mathsf{s}(\square),\blacksquare)}
       {\posvar{1} \leq \mathsf{0}
        \wedge \neg (\posvar{1.1} \geq \mathsf{0})}
\\
%%%%
 }
\]
where $\longleftarrow$ denotes $\sqsubset$.
\end{example}

Finally, we show a construction of a CTA recognizing $G(T)$.
We prepare two kinds of states $q_u$ and $\tilde{q}_u$
for each pattern to distinguish whether the term with the state
satisfies the constraint in the corresponding constrained term.
In the following, we use $\dot{q}$ to denote $q$ or $\tilde{q}$, and,
given an unconstrained pattern $u$ and a set $U$ of constrained
patterns, we denote the set of \emph{least general} constrained patterns
in $U$ which are more general than $u$ by $\Maximal[u](U)$:
\[
 \Maximal[u](U) = \{ \TwC{u'}{\phi'} \in U \mid 
 u' \sqsubseteq u,
 ~
 (\nexists \TwC{u''}{\phi''} \in U.\ 
 u'' \sqsubseteq u \,\wedge\,
 \TwC{u'}{\phi'} \sqsubset \TwC{u''}{\phi''}) \}
\]
Now we give an intuitive outline of the construction.
\begin{description}
 \item[Final states]
	    For a pattern $u$ capturing an instance of $\TwC{t}{\phi}$
	    in $T$,
	    we add the state $\tilde{q}_u$ to $Q_f$ if $u$ also captures
	    an instance of a non-variable proper subterm of
	    $\TwC{t'}{\phi'}$ in $T$;
	    otherwise, we add $\tilde{q}_\square$ and
	    $\tilde{q}_\blacksquare$ where $u$ is in
	    $\Terms_\square$ and
	    $\Terms_\blacksquare$, resp.
 \item[States]
	    In addition to $Q_f$, for a pattern $u$ capturing an
	    instance of a non-variable proper subterm of
	    $\TwC{t'}{\phi'}$ in $T$, we add $q_u$ to $Q$;
	    for arbitrary terms in $\Terms_\square$ and
	    $\Terms_\blacksquare$, we add $q_\square$ and
	    $q_\blacksquare$ to $Q$, resp.
 \item[Transition rules]
	    $f(\dot{q}_{u_1},\ldots,\dot{q}_{u_n})$ transitions to 
	    a state by considering the following properties of the least
	    general pattern $u$ which is more general than
	    $f(u_1,\ldots,u_n)$: 
	    \begin{enumerate}
	     \renewcommand{\labelenumi}{(\alph{enumi})}
	     \renewcommand{\theenumi}{(\alph{enumi})}
	     \item\label{condition-1}
		  whether the pattern $u$ also matches a non-variable
		  \emph{proper} subterm of some initial constrained term 
		  $\TwC{t}{\phi}$ $\in$ $T$,
	     \item\label{condition-2}
		  whether the pattern $u$ also matches an instance of
		  $\TwC{t'}{\phi'}$ in $T$, and
	     \item\label{condition-3} 
		  whether $u$ is in $\Terms_\square$. 
	    \end{enumerate}
	    The state $f(\dot{q}_{u_1},\ldots,\dot{q}_{u_n})$
	    transitions to is decided as follows:
	    \begin{center}
	     \begin{tabular}{c|ccc|c}
	      state
	      & \ref{condition-1} 
	      & \ref{condition-2} 
	      & \ref{condition-3}
	      & the transition rule is in \\
	      \hline
%	      \raisebox{0pt}[12pt]{}
	      $q_u$ & yes & no & --- & $\Delta_{\mathsf{pat}}$ \\
%	      \raisebox{0pt}[12pt]{}
	      $\tilde{q}_u$ & yes & yes & --- & $\tilde{\Delta}_{\mathsf{pat}}$ \\
%	      \raisebox{0pt}[12pt]{}
	      $q_\square$ & no & no & yes & $\Delta_\square$ \\
%	      \raisebox{0pt}[12pt]{}
	      $\tilde{q}_\square$ & no & yes & yes &
	      $\tilde{\Delta}_\square$ \\
%	      \raisebox{0pt}[12pt]{}
	      $q_\blacksquare$ & no & no & no & $\Delta_\blacksquare$ \\
%	      \raisebox{0pt}[12pt]{}
	      $\tilde{q}_\blacksquare$ & no & yes & no &
	      $\tilde{\Delta}_\blacksquare$ \\
	     \end{tabular}
	    \end{center}
\end{description}
This approach to the construction is formalized as follows.
\begin{definition}
 \label{def:G(T)}
 For a finite set $T$ of constrained terms, 
 we prepare the set $\Labels(T)$ of constrained patterns whose term parts
 are used as labels for states:
 \[
 \Labels(T) = \Inst(\Replace{T})
 \]
 The subset $\Initials(T)$ of constrained patterns that match
 elements of $T$ 
 is defined as follows:
 \[
 \Initials(T) = \{
 \TwC{u}{\phi} \in \Labels(T) \mid
 \exists \TwC{u'}{\phi'} \in \Replace{T}.\
 \TwC{u'}{\phi'} \sqsubseteq \TwC{u}{\phi}
 \}
 \]
We prepare the set $Q_0$ of candidates for states as follows:
\[
 Q_0 = \{ q_u \mid \exists \phi.\ 
 \TwC{u}{\phi} \in \Labels(T)
 \,\wedge\,
  (\exists u' \in \Subpatterns(\Replace{T}).\ 
  u' \sqsubseteq u
  )
 \}
\]
 Then, we define a CTA $\cA$ $=$ $(Q,Q_f,\Delta)$ such that
\[
 \begin{array}{@{}r@{~}r@{~}l@{}}
  Q_f & = & \{ \tilde{q}_u \mid 
   q_u \in Q_0,
   ~
   \exists \phi.\ \TwC{u}{\phi} \in \Initials(T)
   \}
   \\
  & & \cup
   \{ \tilde{q}_\square \mid 
   \exists \TwC{u}{\phi} \in
   \Replace{T}.\ 
   u \in \Terms_\square 
   \,\wedge\,
   q_u \not\in Q_0
%   (\nexists u' \in \Subpatterns(\Replace{T}).\ 
%   u' \sqsubseteq u
%   )
   \} 
   \cup
   \{ \tilde{q}_\blacksquare \mid 
   \exists \TwC{u}{\phi} \in
   \Replace{T}.\ 
   u \in \Terms_\blacksquare 
   \,\wedge\,
   q_u \not\in Q_0
%   (\nexists u' \in \Subpatterns(\Replace{T}).\
%   u' \sqsubseteq u
%   )
   \} 
   \\[5pt]
  Q & = & 
  Q_f 
  \cup
  \{ q_u \in Q_0 \mid \tilde{q}_u \not\in Q_f \}
  \cup 
  \{ q_\square, q_\blacksquare\}
  \\[3pt]
  \Delta & = & \Delta_{\mathsf{pat}} \cup \tilde{\Delta}_{\mathsf{pat}} 
  \cup \Delta_\square \cup \tilde{\Delta}_\square
  \cup \Delta_\blacksquare \cup \tilde{\Delta}_\blacksquare
 \end{array}
\]
where
\[
 \begin{array}{@{}l@{~}l@{~}l@{~}l@{~}c@{~}c@{~}c@{~}c@{~}c@{}r@{}}
%%%%%
  \multicolumn{2}{@{}l@{}}{\Delta_{\mathsf{pat}} := }
  & & &
   \overbrace{~~~~~~~~~~~}^{\mbox{\rm \ref{condition-1}}} & &
   \overbrace{~~~~~~~~~~~~~~~~~~~~~~~~~}^{\mbox{\rm \ref{condition-2}}} & & 
   \overbrace{~~~~~~~~~~}^{\mbox{\rm \ref{condition-3}}} 
   \\
  &
   \{ f(\dot{q}_{u_1},\ldots,\dot{q}_{u_n}) \xrightarrow[]{\phi} q_u
   \mid &
   \multicolumn{2}{@{}l@{}}{
   f \in \cF\cup\cG,
   ~
   \dot{q}_{u_1},\ldots,\dot{q}_{u_n} \in Q, 
   ~
   q_u \in Q,
   }
   \\
  & &
   & \TwC{u}{\phi} \in
   \Maximal[f(u_1,\ldots,u_n)](\Labels(T)),
   &
%   \tilde{q}_u \not\in Q_f,
   &  & 
   \TwC{u}{\phi} \not\in \Initials(T) 
   & & 
   & \}
   \\
%%%%%
  \lefteqn{\tilde{\Delta}_{\mathsf{pat}} = }
  \\
  &
   \{ f(\dot{q}_{u_1},\ldots,\dot{q}_{u_n}) \xrightarrow[]{\phi} \tilde{q}_u
   \mid &
   \multicolumn{5}{@{}l@{}}{
   f \in \cF\cup\cG,
   ~
   \dot{q}_{u_1},\ldots,\dot{q}_{u_n} \in Q, 
   ~
   \tilde{q}_u \in Q,
   }
   &
   \\
  & &
   & \TwC{u}{\phi} \in
   \Maximal[f(u_1,\ldots,u_n)](\Labels(T)),
   &
%   \tilde{q}_u \in Q_f,
   &  & 
   \TwC{u}{\phi} \in \Initials(T) 
   & & 
   & \}
   \\
%%%%%
  \lefteqn{\Delta_\square = }
  \\
  &
   \{ f(\dot{q}_{u_1},\ldots,\dot{q}_{u_n}) \xrightarrow[]{\phi} q_\square
   \mid &
   \multicolumn{5}{@{}l@{}}{
   f \in \cF\cup\cG,
   ~
   \dot{q}_{u_1},\ldots,\dot{q}_{u_n} \in Q, 
   ~
   q_\square \in Q, } 
   &
   \\
  & &
   (\exists & \TwC{u}{\phi} \in
   \Maximal[f(u_1,\ldots,u_n)](\Labels(T)).
   &
   q_u \not\in Q_0
   & \wedge & 
   \TwC{u}{\phi} \not\in \Initials(T) 
   & \wedge & 
   u \in \Terms_\square 
   & ) \}
   \\
%%%%%
  \lefteqn{\tilde{\Delta}_\square = }
  \\
  &
   \{ f(\dot{q}_{u_1},\ldots,\dot{q}_{u_n}) \xrightarrow[]{\phi} \tilde{q}_\square
   \mid &
   \multicolumn{5}{@{}l@{}}{
   f \in \cF\cup\cG,
   ~
   \dot{q}_{u_1},\ldots,\dot{q}_{u_n} \in Q, 
   ~
   \tilde{q}_\square \in Q, } 
   &
   \\
  & &
   (\exists & \TwC{u}{\phi} \in
   \Maximal[f(u_1,\ldots,u_n)](\Labels(T)).
   &
   q_u \not\in Q_0
   & \wedge & 
   \TwC{u}{\phi} \in \Initials(T) 
   & \wedge & 
   u \in \Terms_\square 
   & ) \}
   \\
%%%%%
  \lefteqn{\Delta_\blacksquare = }
  \\
  &
   \{ f(\dot{q}_{u_1},\ldots,\dot{q}_{u_n}) \xrightarrow[]{\phi} q_\blacksquare
   \mid &
   \multicolumn{5}{@{}l@{}}{
   f \in \cF\cup\cG,
   ~
   \dot{q}_{u_1},\ldots,\dot{q}_{u_n} \in Q, 
   ~
   q_\blacksquare \in Q, } 
   &
   \\
  & &
   (\exists & \TwC{u}{\phi} \in
   \Maximal[f(u_1,\ldots,u_n)](\Labels(T)).
   &
   q_u \not\in Q_0
   & \wedge & 
   \TwC{u}{\phi} \not\in \Initials(T) 
   & \wedge & 
   u \in \Terms_\blacksquare 
   & ) \}
   \\
%%%%%
  \lefteqn{\tilde{\Delta}_\blacksquare = }
  \\
  &
   \{ f(\dot{q}_{u_1},\ldots,\dot{q}_{u_n}) \xrightarrow[]{\phi} \tilde{q}_\blacksquare
   \mid &
   \multicolumn{5}{@{}l@{}}{
   f \in \cF\cup\cG,
   ~
   \dot{q}_{u_1},\ldots,\dot{q}_{u_n} \in Q, 
   ~
   \tilde{q}_\blacksquare \in Q, } 
   &
   \\
  & &
   (\exists & \TwC{u}{\phi} \in
   \Maximal[f(u_1,\ldots,u_n)](\Labels(T)).
   &
   q_u \not\in Q_0
   & \wedge & 
   \TwC{u}{\phi} \in \Initials(T) 
   & \wedge & 
   u \in \Terms_\blacksquare 
   & ) \}
   \\
%%%%%
 \end{array}
\]
\end{definition}
Note that for any pattern $u$ $\not\in$ $\{\square,\blacksquare\}$, 
% (i) if $\tilde{q}_u$ $\in$ $Q_f$, then $q_u$ $\not\in$ $Q$, and
%(ii)
$q_u$ $\in$ $Q_0$ iff $q_u$ $\in$ $Q$ or $\tilde{q}_u$ $\in$ $Q_f$, and
thus, $q_u$ $\in$ $Q_0$ which makes \ref{condition-1} true implicitly
holds in the conditions of $\Delta_{\mathsf{pat}}$ and
$\tilde{\Delta}_{\mathsf{pat}}$.
Note also that the constructed transition rules are not always
optimized via the construction in Definition~\ref{def:G(T)}.%
\footnote{ As in the case of unconstrained tree automata~\cite{TATA}, 
a state $q$ is not accessible if there is no transition rule of the
form $l \xrightarrow[]{\phi} q$.
Thus, such a state $q$ and all the transition rules having $q$ can be
removed from $\cA$.}
\begin{theorem}
 \label{th:correctness}
The CTA $\cA$ constructed in Definition~\ref{def:G(T)} is a
 deterministic, complete, and constraint-complete CTA such that $L(\cA)$
 $=$ $G(T)$.
\end{theorem}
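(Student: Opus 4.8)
The plan is to prove all four properties at once from a single structural characterization of the runs of $\cA$. For a ground term $t \in \Terms(\cF\cup\cG)$ write $P(t) = \{ \TwC{u}{\phi} \in \Labels(T) \mid u \sqsubseteq t,\ t \models \phi\}$ for the set of \emph{labels capturing} $t$. The first step is to observe that $P(t)$ always contains exactly one of $\TwC{\square}{\top}$, $\TwC{\blacksquare}{\top}$ (so it is non-empty) and that it is directed upward with respect to $\sqsubseteq$: if $\TwC{u}{\phi},\TwC{u'}{\phi'} \in P(t)$ then $u \sqcap u'$ is defined and $\sqsubseteq t$, the conjunction $\phi \wedge \phi'$ is satisfied by $t$ hence satisfiable, and therefore $\TwC{u\sqcap u'}{\phi\wedge\phi'} \in \Inst(\Replace{T}) = \Labels(T)$ by the closure clause of $\Inst$. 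Consequently $P(t)$ has a unique $\sqsubseteq$-maximal (most specific) element $\mathrm{type}(t) = \TwC{u^\ast}{\phi^\ast}$. This single fact already yields determinism: two distinct maximal capturing labels would, by the same meet argument, have $\phi \wedge \phi'$ satisfiable and so fail to be maximal, hence at most one transition constraint can hold at any node.

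The heart of the argument is the Key Lemma, proved by induction on $t$: the term $t$ reaches exactly one state, namely $\dot q_{u^\ast}$ (collapsed to $\dot q_\square$ or $\dot q_\blacksquare$ precisely when $q_{u^\ast}\notin Q_0$), with the tilde present iff $\mathrm{type}(t) \in \Initials(T)$. The inductive step forms $w = f(u_1,\ldots,u_n)$ from the children's type-shapes and must show that $\mathrm{type}(f(t_1,\ldots,t_n))$ is exactly the unique element of $\Maximal[w](\Labels(T))$ whose constraint $f(t_1,\ldots,t_n)$ satisfies --- i.e.\ that the local choice prescribed by $\Delta$ computes the global type. This commutation rests on two closure facts: (i) every proper subterm of a term-part of a $\Labels(T)$-element again occurs, with constraint $\top$, as a term-part in $\Labels(T)$ (proved by induction over the construction of $\Inst$, using that $\Subpatterns(\Replace{T})$ and $\square,\blacksquare$ seed $\Inst$ and that $\sqcap$ commutes with taking subterms); and (ii) for any such subterm-pattern $w_i$ with $w_i \sqsubseteq t_i$ one has $w_i \sqsubseteq u_i$, because $\TwC{w_i}{\top} \in P(t_i)$ and $u_i$ is the term-part of the maximum of $P(t_i)$. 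Together these give $u^\ast \sqsubseteq w \sqsubseteq t$, so that matching a label against the abstract $w$ is sound and complete for matching it against $t$, and the maximal label over $w$ coincides with $\mathrm{type}(t)$. The soundness of collapsing a child to $\square$ or $\blacksquare$ when $q_{u_i}\notin Q_0$ is part of the same bookkeeping: by the definition of $Q_0$ through $\Subpatterns(\Replace{T})$, the only labels whose $i$-th immediate subterm can capture $t_i$ in that case are those carrying $\square$ or $\blacksquare$ in that slot, so no matching information is lost.

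Given the Key Lemma, the remaining three conclusions are short. \textbf{Completeness} holds because $\mathrm{type}(t)$ always exists and the six rule sets $\Delta_{\mathsf{pat}},\tilde\Delta_{\mathsf{pat}},\Delta_\square,\tilde\Delta_\square,\Delta_\blacksquare,\tilde\Delta_\blacksquare$ exhaust the classification of $\mathrm{type}(t)$ by the three conditions \ref{condition-1}--\ref{condition-3}, so some rule always fires. \textbf{Constraint-completeness} holds because the firing rule carries the constraint $\phi^\ast$ of a constrained pattern $\TwC{u^\ast}{\phi^\ast}$, in which every position variable $\posvar{\pi}$ occurs only where $u^\ast|_\pi \in \Terms_\square$; since $u^\ast \sqsubseteq t$, the clause of $\sqsubseteq$ at $\square$ forces $t|_\pi \in \Terms(\cG)$, which is exactly the required condition. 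Finally, \textbf{$L(\cA) = G(T)$}: by the Key Lemma $t$ is accepted iff $\mathrm{type}(t) \in \Initials(T)$, i.e.\ iff some $\TwC{u'}{\phi'}\in\Replace{T}$ has $u' \sqsubseteq t$ and $t \models \phi'$; an auxiliary lemma relating $\Replace{T}$ to $G$ --- that choosing $\square$/$\blacksquare$ for the non-constraint variables of $\TwC{t_0}{\phi_0}\in T$ and $\square$ for its constraint variables reconstructs precisely the ground instances in $G(\TwC{t_0}{\phi_0})$ --- identifies this condition with $t \in G(T)$.

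The step I expect to be the main obstacle is the commutation inside the Key Lemma, together with the two closure facts it relies on; the delicate point is verifying that the abstraction carried by a child state, and in particular the collapse to $q_\square/q_\blacksquare$ governed by $Q_0$, retains exactly the information the parent needs to select the globally most specific label, neither more nor less.
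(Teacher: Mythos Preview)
Your proposal is correct and follows essentially the same approach as the paper, but in far greater detail: the paper's proof is really a sketch that isolates the same three structural properties of $\Labels(T)$ (positions in constraints lie under $\Terms_\square$; the constraints over a fixed pattern exhaust all cases; the least general ones are mutually unsatisfiable) and then asserts the conclusions, whereas you make the underlying invariant explicit via $\mathrm{type}(t)$ and your Key Lemma. Your directedness argument for $P(t)$ is exactly what underwrites the paper's second and third bullets, and your treatment of the $Q_0$-collapse and of $L(\cA)=G(T)$ supplies the bookkeeping the paper omits with ``it is clear''.
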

\begin{proof}
Due to the construction of constrained patterns used for states, 
for $\TwC{u}{\phi}$ $\in$ $\Labels(T)$, 
all of the following hold:
\begin{itemize}
 \item $\pi$ $\in$ $\Pos(u)$ and $u|_\pi$ $\in$ $\Terms_\square$
       for any variable $\posvar{\pi}$ $\in$ $\FVar(\phi)$, 
 \item $\bigvee_{\TwC{u}{\psi} \in \Labels(T)} \psi$ is
       valid w.r.t.\ $\cS$, and
 \item if $\TwC{u}{\phi}$ is a least general in $\Labels(T)$, then
       $\phi \wedge \psi$ is not satisfiable w.r.t.\ $\cS$ for any
       least general constrained pattern $\TwC{u}{\psi}$ in $\Labels(T)$
       such that $\phi$ $\not\bumpeq$ $\psi$.
 \end{itemize}
It follows from both the first property above and the construction of
 transition rules that $\cA$ is constraint-complete.  
It follows from the second and third properties above that $\cA$ is
 complete and deterministic. 
From these properties, every given term transitions to a unique state
 that keeps the structure of the given term as much as possible in the
 sense of the patterns to be considered. 
 In constructing transition rules, we add constraints to transition
 rules so as to transition to a final state if a given initial ground
 term is an instance of a constrained term in $T$. 
 For these reasons, it is clear that $L(\cA)$ $=$ $G(T)$.
\end{proof}

\begin{example} 
 \label{ex:complete}
Consider $\cG$, $\cP$ and $\cS$ in Example~\ref{ex:replace},
$\cF$ $=$ $\{ \mathsf{f}/1 \}$, and
$T$ $=$ $\{ \TwC{\mathsf{f}(x)}{x \leq \mathsf{0}}, 
        \TwC{\mathsf{f}(\mathsf{s}(x))}{x \geq \mathsf{0}} \}$.
Then, we have that 
\[
 \begin{array}{@{}l@{}}
  \Replace{T}
   =
   \{ \TwC{\mathsf{f}(\square)}{\posvar{1}\leq \mathsf{0}},
   \TwC{\mathsf{f}(\mathsf{s}(\square))}{\posvar{1.1}\geq
   \mathsf{0}}\}\\[5pt]
   \Subpatterns(\Replace{T})
   =
   \{ \mathsf{s}(\square) \} \\[5pt]
   \Inst(\Replace{T})
   =
   \Replace{T} \cup 
   \{ \TwC{\square}{\top}, \TwC{\mathsf{s}(\square)}{\top} \} \\
  ~\hspace{26.5mm}
   \cup
   \left\{
    \begin{array}{@{\>}l@{\>}}
     \TwC{\blacksquare}{\top},
   \TwC{\mathsf{f}(\square)}{\neg (\posvar{1}\leq\mathsf{0})},
   \TwC{\mathsf{f}(\mathsf{s}(\square))}{\neg
   (\posvar{1.1}\geq\mathsf{0})},
   \TwC{\mathsf{f}(\mathsf{s}(\square))}{\posvar{1}\leq\mathsf{0}\wedge\posvar{1.1}\geq\mathsf{0}}, \\
%  ~\hspace{32mm}
   \TwC{\mathsf{f}(\mathsf{s}(\square))}{\neg (\posvar{1}\leq\mathsf{0})\wedge\posvar{1.1}\geq\mathsf{0}},
   \TwC{\mathsf{f}(\mathsf{s}(\square))}{\posvar{1}\leq\mathsf{0}\wedge\neg
   (\posvar{1.1}\geq\mathsf{0})},
   \TwC{\mathsf{f}(\mathsf{s}(\square))}{\neg (\posvar{1}\leq\mathsf{0})
   \wedge \neg (\posvar{1.1}\geq\mathsf{0})}
   \\
   \end{array}
   \right\} \\[12pt]
   \Initials(T)
   =
   \Replace{T}
   \cup
   \{
   \TwC{\mathsf{f}(\mathsf{s}(\square))}{\posvar{1}\leq\mathsf{0}\wedge\posvar{1.1}\geq\mathsf{0}},
       \TwC{\mathsf{f}(\mathsf{s}(\square))}{\posvar{1}\leq\mathsf{0}\wedge\neg
       (\posvar{1.1}\geq\mathsf{0})},
       \TwC{\mathsf{f}(\mathsf{s}(\square))}{\neg
       (\posvar{1}\leq\mathsf{0}) \wedge \posvar{1.1}\geq\mathsf{0}}
       \} \\
 \end{array}
\]
The CTA $\cA$ $=$ $(
 \{ 
 q_\square, q_\blacksquare,  q_{\mathsf{s}(\square)},
 \tilde{q}_{\blacksquare}
 \},
 \{ 
 \tilde{q}_{\blacksquare}
 \},
 \Delta)$ is constructed by Definition~\ref{def:G(T)} with the following
 transition rules:
 \[
 \Delta =
 \left \{
 \begin{array}{l@{~~~~}l@{~~~~}l@{~~~~}l}
  \mathsf{0} \to q_\square,
   &
  \mathsf{s}(q_\square) \to q_{\mathsf{s}(\square)},
   &
   \mathsf{f}(q_\square) \xrightarrow[]{\posvar{1} \leq \mathsf{0}} 
   \tilde{q}_\blacksquare,
   &
   \mathsf{f}(q_{\mathsf{s}(\square)}) \xrightarrow[]{\posvar{1}
   \leq \mathsf{0} \wedge \posvar{1.1} \geq \mathsf{0}} 
   \tilde{q}_\blacksquare,
   \\
  & 
  \mathsf{s}(q_\blacksquare) \to q_\blacksquare,
   &
   \mathsf{f}(q_\square) \xrightarrow[]{\neg (\posvar{1} \leq \mathsf{0})} 
   q_{\blacksquare},
   &
   \mathsf{f}(q_{\mathsf{s}(\square)}) \xrightarrow[]{\posvar{1}
   \leq \mathsf{0} \wedge \neg (\posvar{1.1} \geq \mathsf{0})} 
   \tilde{q}_\blacksquare,
   \\
  &
  \mathsf{s}(q_{\mathsf{s}(\square)}) \to q_{\mathsf{s}(\square)},
   &
   \mathsf{f}(q_\blacksquare) \to q_\blacksquare,
   &
   \mathsf{f}(q_{\mathsf{s}(\square)}) \xrightarrow[]{\neg
   (\posvar{1} \leq \mathsf{0}) \wedge \posvar{1.1} \geq \mathsf{0}} 
   \tilde{q}_\blacksquare,
   \\
  &
   \mathsf{s}(\tilde{q}_\blacksquare) 
   \to q_\blacksquare,
   &
   \mathsf{f}(\tilde{q}_\blacksquare) \to q_\blacksquare,
   &
   \mathsf{f}(q_{\mathsf{s}(\square)}) \xrightarrow[]{\neg
   (\posvar{1} \leq \mathsf{0}) \wedge \neg (\posvar{1.1} \geq \mathsf{0})} 
   q_{\blacksquare}
   \\ 
 \end{array}
 \right \}
 \]
\end{example}

%------------------------------------------------------------------------------
\section{Conclusion}
\label{sec:conclusion}

In this paper, we proposed a construction method of deterministic,
complete, and constraint-complete CTAs recognizing ground instances of
constrained terms. 
For the lack of space, we did not describe how to apply it to the
verification of reduction-completeness and sufficient completeness,
while we have already worked for some examples. 

Unlike the case of tree automata, for a state, it is in general
undecidable whether there exists a term reachable to the state, and
thus, the intersection emptiness problem of CTAs is
undecidable in general (see the case of
AWEDC~\cite[Theorem~4.2.10]{TATA}).
For this reason, we will use a trivial sufficient condition that the set
of final states of product automata is empty. 
Surprisingly, this is sometimes useful for product automata. 
To make this approach more powerful, we need to develop a method to
find states that are not reachable from any ground term, e.g., there is
a possibility to detect a transition rule that is never used:
for $\mathsf{f}(q_{\mathsf{s}(\square)}) \xrightarrow[]{\neg (\posvar{1}
\leq \mathsf{0}) \wedge \neg (\posvar{1.1} \geq \mathsf{0})} q_{\blacksquare}$ in
	 Example~\ref{ex:complete},
we know that in applying this rule, the first argument of $\mathsf{f}$
is always an interpretable term of the form $\mathsf{s}(t)$, and thus,
$\posvar{1}$ in the constraint can be replaced by
$\mathsf{s}(\posvar{1.1})$;
then we can notice that the constraint is unsatisfiable, and thus, this
transition rule is never used. 
Formalizing this observation is one of our future work. 

%------------------------------------------------------------------------------
% Refs:
%
\bibliographystyle{eptcs}
%\bibliography{ref}

%------------------------------------------------------------------------------
\end{document}